\newcommand{\ket}[1]{\left|#1\right\rangle}
\newcommand{\bra}[1]{\left\langle#1\right|}
\newcommand{\overlap}[2]{\left\langle#1|#2\right\rangle}
\renewcommand*{\mathellipsis}{%
  \mathinner{{\ldotp}{\ldotp}{\ldotp}}%
}
\newtheorem{lemma}[algocf]{Lemma}
\newtheorem{corollary}[algocf]{Corollary}
\newtheorem{theorem}[algocf]{Theorem}
\theoremstyle{definition}
\numberwithin{algocf}{section}
\begin{document}
\title{Multiple photon effective Hamiltonian in linear quantum optical networks}

\author{Juan Carlos Garcia-Escartin}
\email{juagar@tel.uva.es} 
\affiliation{Universidad de Valladolid, Dpto. Teor\'ia de la Se\~{n}al e Ing. Telem\'atica, Paseo Bel\'en n$^{\circ}$ 15, 47011 Valladolid, Spain.}
\author{Vicent Gimeno}
\email{gimenov@uji.es} 
\affiliation{Universitat Jaume I, Departamento de Matem\'aticas and IMAC-Institut Universitari de Matem\`atiques i Aplicacions de Castell\'o, 12071 Castell\'on de la Plana, Spain.}
\author{Julio Jos\'e Moyano-Fern\'andez}
\email{moyano@uji.es} 
\thanks{All authors contributed equally to this work.}
\affiliation{Universitat Jaume I, Departamento de Matem\'aticas and IMAC-Institut Universitari de Matem\`atiques i Aplicacions de Castell\'o, 12071 Castell\'on de la Plana, Spain.}


\date{\today}

\begin{abstract}
We give an alternative derivation for the explicit formula of the effective Hamiltonian describing the evolution of the quantum state of any number of photons entering a linear optics multiport. The description is based on the effective Hamiltonian of the optical system for a single photon and comes from relating the evolution in the Lie group that describes the unitary evolution matrices in the Hilbert space of the photon states to the evolution in the Lie algebra of the Hamiltonians for one and multiple photons. We give a few examples of how a group theory approach can shed light on some properties of devices with two input ports.
\end{abstract}


\maketitle 

\section{Linear quantum-optical networks}
The evolution of the quantum states of light when they pass linear optical networks can be described from the classical scattering matrix of the network $S$. In classical electromagnetism, $S$ relates the amplitudes of the input fields in $m$ input modes with the amplitudes of the $m$ output modes and has many applications in microwave circuit design\cite{Poz04}.

In quantum optics, we can replace the field amplitudes with the probability amplitudes in the wavefunction of a photon and use $S$ to see the evolution of the creation operator in each mode. 

However, when there are multiple photons, the evolution does not only include wave interference effects, like in classical electromagnetic waves, but also purely quantum effects related to the bosonic nature of the photons. A most striking example is the Hong-Ou-Mandel effect in which two independent photons that reach simultaneously the two separate inputs of a beam splitter always come out together\cite{HOM87}. These interactions have no classical counterpart and are behind the ability of linear optical systems to give an efficient solution to the boson sampling problem, a task that is strongly believed to be inefficient in any classical computing machine\cite{AA11}. 

In this paper, we describe the quantum evolution of photons through linear optical elements from results from group theory. We work with photonic states
\begin{equation}
\label{basisstates}
\bigotimes_{k=1}^{m}\ket{n_k}_k=\ket{n_1}_1\ket{n_2}_2\mathellipsis\ket{n_m}_m=\ket{n_1n_2\ldots n_m}
\end{equation}
with $n$ photons that are distributed into $m$ orthogonal modes. In the most general case, these modes represent any set of orthogonal single photon states so that
\begin{equation}
_k\overlap{1}{1}_l=\delta_{k,l}.
\end{equation}
The modes can be different paths, which gives a very intuitive picture of the network, but they can also represent orthogonal temporal wavefunctions, different directions in the same spatial path, photons in orthogonal polarization states, photons with different orbital angular momentum or with a different frequency. 

We consider linear optical systems where the number of photons is conserved
\begin{equation}
\sum_{k=1}^{m}n_k=n.
\end{equation}
In passive lossless systems the total energy is conserved, which fits well with the description in terms of classical fields. The quantum equivalent is a conservation of probability. If we have a superposition of $n$ photon states, the output will be a different superposition where all the photon states must sum to a probability of one. The input and output states are related by a unitary operator
\begin{equation}
\ket{\psi}_{out}=U\ket{\psi}_{in}.
\end{equation}

The same mathematical description could be extended to active linear systems as long as the number of photons is preserved. In principle, it could include elements where a change of frequency introduces an energy change, provided that the total number of photons in the $m$ modes of interest does not change.

In quantum optics and quantum information we usually find states living in a finite-dimensional Hilbert space and the operators can be written as matrices. Our states live in a Hilbert space $\mathcal{H}$ of a size $M={m+n-1 \choose n}$. A generic state $\ket{\psi}$ can be described as a linear combination of the basis elements from Equation (\ref{basisstates}) that exhaust all the possible ways to distribute $n$ photons into $m$ modes. The problem is equivalent to counting the number of ways to place $n$ balls into $m$ boxes.

\subsection{Unitary evolution}
In this finite-dimensional Hilbert space, we can write the states as complex column vectors and the unitary operators $U$ as $M\times M$ unitary matrices. For systems with exactly one photon $S=U$ and the quantum state of the photon in mode $k$, $\ket{1}_k$, is represented as a column vector with $m$ rows that are zero except for the $k$th row, which has a one.

For $n$ photons there is a known, more involved transformation that gives $U$ in terms of $S$. The evolution matrix $U$ of the system with $n$ photons belongs to the unitary group $U(M)$ and $S\in U(m)$. We can define a group homomorphism $\varphi: U(m)\to U(M)$.

A nice description of the properties of $\varphi$ and a verification that it is indeed a homomorphism between groups is given in Aaronson and Arkhipov's paper\cite{AA11}. We content ourselves with noticing the physically relevant fact that the homomorphism preserves the group operation to compose operators, which is matrix multiplication in our description. If we have a succession of $N$ optical networks, the first with a matrix $S_1$ and the last with $S_N$, the total system has a scattering matrix $S=S_N \ldots S_2S_1$. Their effect of $n$ photons can also be described by multiplying the corresponding unitary operators with $U(S)=U(S_N)\ldots U(S_2)U(S_1)$, as expected.

Here and in the following sections, we work with the usual photon creation and annihilation operators $\hat{a}_k^{\dagger}$ and $\hat{a}_k$\cite{Lou00}. Their effect on states with $n_k$ photons in mode $k$ is
\begin{align}
\label{CreatAnni}
\hat{a}_k^{\dagger}\ket{n_k}_k=\sqrt{n_k+1}\ket{n_k+1}_k&,&  \\
\hat{a}_k\ket{n_k}_k=\sqrt{n_k}\ket{n_k-1}_k, \hspace{2ex} n \geq 1&,&\hspace{1ex} \hat{a}_k\ket{0}_k=0.
\end{align}

There are a few equivalent ways to write $\varphi$. For our purposes, we prefer the description in terms of the evolution of the operators in the Heisenberg picture, which shows how all the operators $a_k^{\dag}$ evolve for all the indices from 1 to $m$\cite{SGL04}. For any $n$-photon input state
\begin{equation}
\ket{n_1 n_2 \ldots n_m}=\prod_{k=1}^{m} \left(\frac{\hat{a}_k^{\dag n_k}}{\sqrt{n_k!}}\right)\ket{00\ldots 0}
\end{equation}
the output state is given from the elements of $S$ as
\begin{equation}
\label{HeisenbergU}
U\ket{n_1 n_2 \ldots n_m}=\prod_{k=1}^{m} \frac{1}{\sqrt{n_k!}}\left(\sum_{j=1}^{m}S_{jk} \hat{a}_j^{\dag}\right )^{n_k}\ket{00\mathellipsis0}.
\end{equation}

Each element of $U$ can be deduced from Equation (\ref{HeisenbergU}). For an input state $\ket{n_1 n_2 \ldots n_m}$ and an output $\ket{n_1' n_2' \ldots n_m'}$, $\bra{n_1'n_2'\ldots n_m'}U\ket{n_1 n_2 \ldots n_m}$ gives the corresponding matrix element for the transition. If we number the states in the basis and write $\ket{q}=\ket{n_1 n_2 \ldots n_m}$ and $\ket{p}=\ket{n_1' n_2' \ldots n_m'}$ as column vectors filled with zeros and a 1 for the $q$th or $p$th row, respectively, $U_{pq}=\bra{p}U\ket{q}$. $|U_{pq}|^2$ is the transition probability from $\ket{q}$ to $\ket{p}$ for the optical system under study. The total probability of finding a photon in an output state $\ket{n_1 n_2 \ldots n_m}$ can be interpreted as the Feynman sum of all the possible photon paths that end with the desired number of photons in each mode. 

Apart from this description of $\varphi$, we can write the elements of $U$ from the permanent of different submatrices of $S$\cite{Cai53,Sch04}.

\subsection{Effective Hamiltonians}
Photons are bosons and do not interact directly. Any Hamiltonian involving only photons must be indeed an effective Hamiltonian and all the changes come from the interaction with an intermediate material system. The molecules in the different media of the optical elements of a linear optics multiport can be modelled with great accuracy by a collection of two-level systems. If the photons are far from the resonant frequencies of each medium in the system, we can use adiabatic elimination to factor out any explicit coupling to the atoms of the optical elements\cite{Ste95}, much like we can describe a passive dielectric only by its index of refraction instead of speaking of multiple absorptions and reemissions. 

There are two related points of view when describing the evolution of photons in linear optics: the unitary evolution operators, $U$, and the Hamiltonians, $H$.

The evolution of a quantum state $\ket{\psi(t)}$ with time is the solution of the Schr\"odinger equation
\begin{equation}
i\hslash \frac{d\ket{\psi(t)}}{dt}=H \ket{\psi(t)},
\end{equation}
with a Hamiltonian $H$ and the reduced Planck's constant $\hslash$. The initial state evolves according to a unitary operator as
\begin{equation}
\ket{\psi(t)}=U(t) \ket{\psi(0)}
\end{equation}
with $U(t)=e^{-\frac{it H}{\hslash}}$. 

We can absorb $-\hslash$ into the Hamiltonian. Similarly, we can do away with time. Light crosses the whole optical network and partial evolution is generally not interesting except for some cases in which the fractional depth of a photon into a uniform optical system can play a role equivalent to $t$. In our finite-dimensional description with a fixed number of photons, $n$, the unitary evolution matrix $U=e^{iH_U}$ is the matrix exponential of $iH_U$ for a Hermitian matrix $H_U$ which gives the effective Hamiltonian. 

In the following, we will directly speak only of these matrices. While the Hermitian matrices we find can be interpreted as effective Hamiltonians, we can think solely in terms of group theory with a description of the transformations in the Lie groups of the quantum evolution matrices and the corresponding transformations in the associated Lie algebras.

The multiphoton Hamiltonian can be deduced from different points of view. Using a Heisenberg picture approach, we can envelope the creation and annihilation operator vectors with matrices derived from $S$ and write the effective Hamiltonian for any known optical network made from beam splitters, phase shifters and parametric amplifiers\cite{LN04}. The multiple photon Hamiltonian for linear optical networks can also be deduced from the scattering matrix by working with a coherent state representation\cite{FX97}. Alternatively, the decomposition of the single photon Hamiltonian $H_S$ in terms of the basis matrices of the corresponding algebra gives us a way to compute $H_U$ using the Jordan-Schwinger map\cite{AC05,ALN06}.

In this paper, we present an approach which, while resting on an analysis in the Heisenberg picture, describes the effective Hamiltonian for $n$-photon states in a way which reminds of the Schr\"ondinger's picture. A description in terms of the Lie algebra (Hamiltonian) behind the unitary evolution gives us a simple way to study linear optical systems from the differential of the homomorphism.

\section{The Hamiltonian for multiple photons}
We work with unitary matrices $S$ and $U$ and Hermitian matrices $H_S$ and $H_U$ so that $S=e^{iH_S}$ and $U=e^{iH_U}$. We call $S$ to the unitary $m\times m$ matrix that gives the evolution for a single photon state, which can be identified with the classical scattering matrix of the linear optical system. We call $U$ to the matrix describing the evolution for a general state of $n$ photons.

In terms of group theory, the unitary matrices belong to the unitary group of dimension $m$ and $M$, with $S\in U(m)$ and $U\in U(M)$, and the Hermitian matrices, up to a factor $i$, belong to the corresponding associated unitary Lie algebra with $iH_S \in \mathfrak{u}(m) $ and $iH_U \in \mathfrak{u}(M) $. All the results we present can be directly adapted to an alternative description of quantum optics using the special unitary group and its algebra. The unitary matrices would have determinant 1 and the Hermitian matrices would be traceless. The results would hold up to an unmeasurable global phase $e^{i\Phi}$ in $S$ that becomes $e^{i n\Phi}$ in $U$.

We consider $U(m)$ and $U(M)$ as (compact) Lie groups. This means they are compact manifolds and their Lie algebras $\mathfrak{u}(m)$~resp.~$\mathfrak{u}(M)$ are the tangent spaces to $U(m)$~resp.~$U(M)$ at the identity $I_m$~resp.~$I_M$. Moreover, the exponential map $\mathfrak{u}(M) \to U(M)$ is well-defined and surjective (cf.\cite[Sect.~18.4]{Gal11}).

We want to write the algebra homomorphism that allows us to write $H_U$ in terms of the elements of $H_S$. First we need to show:

\begin{lemma}\label{lemma:1}
The photonic homomorphism $\varphi$ is $C^\infty$.
\end{lemma}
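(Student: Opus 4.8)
The plan is to use the explicit description of $\varphi$ furnished by Equation~(\ref{HeisenbergU}) and to show that every matrix entry of $\varphi(S)$ is in fact a polynomial in the entries of $S$. Since polynomials are $C^\infty$ and since $U(m)$ and $U(M)$ are embedded submanifolds of the ambient matrix spaces, this will immediately yield the claim.

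First I would fix the ordered basis $\{\ket{q}\}_{q=1}^{M}$ from Equation~(\ref{basisstates}) and read off the entry $\varphi(S)_{pq}=\bra{p}\varphi(S)\ket{q}$ directly from Equation~(\ref{HeisenbergU}). Expanding each factor $\left(\sum_{j=1}^{m}S_{jk}\hat a_j^{\dagger}\right)^{n_k}$ by the multinomial theorem and using that creation operators of distinct modes commute, this factor is a homogeneous polynomial of degree $n_k$ in the variables $S_{1k},\dots,S_{mk}$ whose coefficients are products of creation operators. Multiplying over $k$ and applying the result to $\ket{00\ldots0}$, the coefficient of the basis vector $\ket{p}$ is a finite sum of monomials $\prod_{j,k}S_{jk}^{\,e_{jk}}$, subject to $\sum_{j}e_{jk}=n_k$, weighted by fixed combinatorial constants coming from the factorials and from the fixed numbers $\bra{p}\prod_{j}\hat a_j^{\dagger\,e_j}\ket{0}$ with $e_j=\sum_k e_{jk}$. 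Hence $\varphi(S)_{pq}$ is a homogeneous polynomial of degree $n=\sum_k n_k$ in the entries $S_{jk}$, and therefore in their real and imaginary parts.

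Next I would invoke the standard fact that such a polynomial, viewed as a function $\mathbb{C}^{m\times m}\cong\mathbb{R}^{2m^2}\to\mathbb{C}$, is $C^\infty$ on the whole ambient space. Collecting the $M^2$ entries produces a $C^\infty$ map $\tilde\varphi\colon\mathbb{C}^{m\times m}\to\mathbb{C}^{M\times M}$ that agrees with $\varphi$ on $U(m)$. Because $U(m)$ is an embedded (regular) submanifold of $\mathbb{C}^{m\times m}$, the restriction $\tilde\varphi|_{U(m)}$ is smooth; and because the image lies in the embedded submanifold $U(M)\subset\mathbb{C}^{M\times M}$ (which we already know, $\varphi$ being a group homomorphism into $U(M)$), smoothness as a map into the ambient space is equivalent to smoothness as a map into $U(M)$. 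This establishes $\varphi\in C^\infty(U(m),U(M))$.

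The computation of the entries is entirely routine; the only point that requires a little care is the passage between smoothness into the ambient matrix space and smoothness as a map of manifolds, and I expect this to be the main (minor) obstacle. It is handled purely by the embedded-submanifold characterization of smooth maps: restricting a smooth ambient map to an embedded submanifold of the domain, and corestricting to an embedded submanifold of the codomain that contains the image, both preserve smoothness.
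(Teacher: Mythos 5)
Your proposal is correct and follows essentially the same route as the paper, whose proof simply observes that the entries of $\varphi(S)$ are polynomial expressions in the entries of $S$ (by Equation~(\ref{HeisenbergU})) and hence $C^\infty$. You have merely spelled out the multinomial expansion and the routine embedded-submanifold bookkeeping that the paper leaves implicit.
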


\begin{proof}
This is trivial, since the entries of $U=\varphi(S)$ are polynomial expressions in the entries of the matrix $S$ (see Equation (\ref{HeisenbergU})).
\end{proof}

This allows us to prove:

\begin{theorem}\label{prop:1}
Let $\varphi: U(m)\to U(M)$ be the photonic homomorphism and consider the differential map $d\varphi: \mathfrak{u}(m) \to \mathfrak{u}(M)$. The diagram 
\[
\begin{tikzcd}
\mathfrak{u}(m) \arrow{d}[swap]{exp} \arrow{r}{d\varphi} &\mathfrak{u}(M) \arrow{d}{exp} \\ 
 U(m) \arrow{r}[swap]{\varphi} & U(M) \end{tikzcd}
\]
is commutative, i.e., $\varphi(\exp{(X)}) = \exp{(d\varphi(X))}$ for every $X \in \mathfrak{u}(m)$.
\end{theorem}

\begin{proof}
This follows from Warner\cite{War83} together with Lemma \ref{lemma:1}.
\end{proof}

We can now express the differential in terms of creation-annihilation operators as follows:

\begin{theorem}\label{prop:35}
Let $d\varphi:\mathfrak{u}(m)\to \mathfrak{u}(M)$ and $H_S=(H_{Sij}) \in \mathfrak{u}(m)$, $\ket{q}=\ket{n_1n_2\ldots n_m}$ and $\ket{p}=\ket{n_1'n_2'\ldots n_m'}$, then
\begin{equation}
\label{algebraHomomorphism}
iH_{Upq}=d\varphi(iH_{S})_{pq}=\bra{p}i\sum_{j=1}^m\sum_{l=1}^m  H_{Sjl}\hat{a}_j^\dag \hat{a}_l\ket{q}
\end{equation}
for $\hat{a}_j^\dag$ resp.~$\hat{a}_l$ the creation resp. annihilation operator in the $j$-th resp. $l$-th mode.
\end{theorem}
\begin{proof}
We work with the differential map and the basis states to find
\begin{equation}
iH_U\ket{n_1n_2\ldots n_m}=d\varphi(iH_S)\ket{n_1n_2\ldots n_m}.
\end{equation}
The elements of $H_U$ are deduced from the effects of $U=\varphi(S)=\varphi(e^{iH_S})$ on the basis states as shown in Equation (\ref{HeisenbergU}). We take the elements of $S$ in terms of the exponential of $iH_S$ and work around the identity taking $S=e^{iH_St}$ so that $S=I_m$ and $U=I_M$ for $t=0$. Then
\begin{align}
\label{dphi}
&iH_U\ket{n_1n_2\ldots n_m}=\left.\frac{d}{dt}\varphi\left(e^{iH_St}\right)\ket{n_1n_2\ldots n_m}\right\vert_{t=0}\nonumber\\
=&\left.\frac{d}{dt}\prod_{k=1}^{m} \frac{\left(\sum_{j=1}^{m} e^{iH_St}_{jk}\hat{a}_j^\dag\right)^{n_k}}{\sqrt{n_k!}}\ket{0\ldots 0}\right\vert_{t=0}\nonumber\\
=&\!\sum_{l=1}^{m}\!\frac{1}{\sqrt{n_l!}}\frac{d}{dt}\!\left.\left (\sum_{j=1}^{m} e^{iH_St}_{jl} \hat{a}^{\dag}_l\right )^{\!n_l}\right\vert_{t=0}  \prod_{k\neq l} \frac{\left(\sum_{j=1}^{m}\delta_{jk}\hat{a}_j^{\dag}\right)^{n_k}}{\sqrt{n_k!}}\ket{0\ldots 0}\nonumber\\
=&\sum_{l=1}^{m}\left( \sqrt{n_l} \sum_{j=1}^{m} iH_{Sjl} \hat{a}^{\dag}_j\right ) \frac{\hat{a}^{\dag n_l-1}_l}{\sqrt{(n_l-1)!}}\prod_{k\neq l}\frac{\hat{a}_k^{\dag n_k}}{\sqrt{n_k!}}\ket{0\ldots 0}\nonumber\\
=&\sum_{l=1}^{m}\sqrt{n_l} \sum_{j=1}^{m} iH_{Sjl} \hat{a}^{\dag}_j \ket{n_1 n_2\ldots n_l-1 \ldots n_m}\nonumber\\
=&\sum_{l=1}^{m}\sum_{j=1}^{m} iH_{Sjl} \hat{a}^{\dag}_j  \hat{a}_l  \ket{n_1 n_2\ldots n_l \ldots n_m}.
\end{align}

If we number the basis states, with $\ket{q}=\ket{n_1n_2\ldots n_m}$ and $\ket{p}=\ket{n_1'n_2'\ldots n_m'}$, the elements of $iH_U$ are
\begin{equation}
\label{HUpq}
\bra{p}iH_U\ket{q}=\bra{p}\sum_{l=1}^{m}\sum_{j=1}^{m} iH_{Sjl} \hat{a}^{\dag}_j  \hat{a}_l \ket{q}.
\end{equation}
\end{proof}

The evolution can be written as a sum of terms involving a single photon changing its mode, including changes from one mode to itself, where we have the photon number operator $\hat{n}_l=\hat{a}^{\dag}_l\hat{a}_l$. We can interpret the evolution in terms of single photon processes from the weighted sums
\begin{equation}
iH_{Uqq}=\sum_{l=1}^{m}i n_l H_{Sll}
\end{equation}
for the diagonal and
\begin{equation}
iH_{Upq}=\sum_{l=1}^{m}\sum_{j\neq l} i \sqrt{(n_j\!+\!1) n_l}H_{Sjl}\langle p | n_1 n_2\ldots n_j\!+\!1 \ldots n_l\!-\!1 \ldots n_m\rangle
\end{equation}
when $p\neq q$. The terms of $iH_U$ coming from a different state, $\overlap{p}{q}=0$, can only include one element of $H_S$. There is a contribution only if the input state $\ket{n_1n_2\dots n_m}$ is ``one photon away'' from the output state $\ket{n_1'n_2'\ldots n_m'}$. If $\overlap{p}{q}\neq 0$, we keep all the terms where the photons move to their original mode and there is one photon number operator $\hat{n}_l$ per occupied mode (a sum of $H_{Sll}$ terms multiplied by the corresponding $n_l$ factor).

In the starting effective Hamiltonian $H_S$ we have a Hermitian matrix with complex conjugate terms at indices $jl$ and $lj$, $H_{Sjl}=H_{Slj}^*$, so that in Equation (\ref{HUpq}) we can rewrite the sums as
\begin{equation}
\sum_{l=1}^{m}\sum_{j=1}^{m} H_{Sjl} \hat{a}^{\dag}_j  \hat{a}_l = \frac{1}{2}\sum_{l=1}^{m}\hat{n}_l H_{Sll}+
\sum_{l=1}^{m}\sum_{j< l }^{m} H_{Sjl} \hat{a}^{\dag}_j  \hat{a}_l +c.c.
\end{equation}

As expected, the resulting operator commutes with the total photon number operator
\begin{equation}
\hat{n}=\sum_{k=1}^{m}\hat{a}^\dag_k\hat{a}_k
\end{equation}
so that
\begin{equation}
[ H_U, \hat{n}]=0,
\end{equation}
showing the number of photons is preserved in the linear optical system. 

The description we have used is limited to linear systems that do not change the number of photons. For these systems we have a clear model and we know how to use passive elements to build a physical system implementing any desired $S$ matrix\cite{RZB94}. There are, however, linear optical operations that are not covered by this description. The most general possible linear operation is given by Bogoliubov transformations of the form  
\begin{equation}
\hat{a}_i \to \sum_{j}(u_{ij}\hat{a}_j+v_{ij}\hat{a}^{\dag}_j)
\end{equation}
which include phenomena like squeezing that give Hamiltonians with terms $\hat{a}_i\hat{a}_j$ or $\hat{a}^\dag_i\hat{a}^\dag_j$ that do no preseve the total photon number and require more sophisticated optical equipment\cite{Bra05}.

\section{Proportionality rules for two modes}
In this section, we give two simple examples in which working with Lie algebras gives an insight on the behaviour of linear optical systems with two input ports and an arbitrary number of photons. This kind of analysis shows the power of using group theory in the study of quantum optics. For two ports, the unitary evolution matrices belong to the $SU(2)$ group, which can be mapped to the rotation group in three dimensions. This has been used in the past to study photon evolution from angular-momentum transformations\cite{CST89} and has a particularly nice description in terms of the Jordan-Schwinger transformation\cite{AC05}.

In order to simplify the proofs in this section, we will fix the global phases and we will consider the homomorphism $\varphi: SU(m)\to SU(M)$. We restrict ourselves to the case of two input modes $\varphi: SU(2)\to SU(M)$ ($M=n+1$).
Since $d\varphi: \mathfrak{su}(2)\to \mathfrak{su}(M)$ is a Lie algebra isomorphism, the space $\mathfrak{h}:=d\varphi(\mathfrak{su}(2))$ is a  
subalgebra of $\mathfrak{su}(M)$. For $\mathfrak{su}(2)$  we will choose the following basis $\{i\sigma_x,i\sigma_y, i\sigma_z\}$  where $\sigma_x,\sigma_y,\sigma_z$ are the Pauli matrices, \emph{i.e.},
$$\begin{aligned}\sigma _{x}:={\begin{pmatrix}0&1\\1& 0\end{pmatrix}},\quad \sigma _{y}:={\begin{pmatrix}0&-i\\i&0\end{pmatrix}},\quad\sigma _{z}:={\begin{pmatrix}1&0\\0&-1\end{pmatrix}}\,.\end{aligned}$$
with the well-known commutation relations 
$$\begin{aligned}\left[\sigma _{x},\sigma _{y}\right]&=2i\sigma _{z},\quad \left[\sigma _{y},\sigma _{z}\right]=2i\sigma _{x},\quad \left[\sigma _{z},\sigma _{x}\right]=2i\sigma _{y}.\end{aligned}$$
Then, since $d\varphi$ is  a Lie algebra isomorphism, $iJ_x:=id\varphi(\sigma_x)$, $iJ_y:=id\varphi(\sigma_y)$, $iJ_z:=id\varphi(\sigma_z)$ becomes a basis of $\mathfrak{h}$ with the following commutation relations
$$\begin{aligned}\left[J_{x},J _{y}\right]&=2iJ_{z},\quad \left[J_{y},J _{z}\right]=2iJ_{x},\quad \left[J_{z},J_{x}\right]=2iJ_{y}.\end{aligned}$$
The expression of $J_z$ is really simple in terms of the photon number operators for each port with $$J_z=d\varphi(\sigma_z)=\hat{n}_1-\hat{n}_2.$$ This is the photon difference operator which appears in the description of homodyne measurement\cite{Bra90,VG93}. Instead of restricting to the usual case of a balanced beam splitter where one of the inputs is a photon number state and the second a local oscillator described by a coherent state, we give a general description in terms of photon numbers. This completes previous similar analyses from a different point of view\cite{CST89, Wal87}.

First, we look into the expected values for the photon difference for an input state $\ket{\alpha}=\ket{n-k,k}$ entering a linear two-port with a scattering matrix $S$, which, at the output, becomes $\ket{\beta}=\varphi(S)\ket{\alpha}$. The mean photon difference is always proportional to $n-2k$ with a proportionality constant $C(S)$ which depends only on $S$ and appears for any input state: 

\begin{theorem}
Given $\ket{\alpha}=\ket{n-k,k}$, for any $S\in SU(2)$, let us denote by $\ket{\beta}=\varphi(S)\ket{\alpha}$. Then there exist a constant $C(S)$ such that
$$
\bra{\beta}\left(\hat{n}_1-\hat{n}_2\right)\ket{\beta}=C(S)(n-2k).
$$
\end{theorem}
\begin{proof}
 Denote $U=\varphi(S)$. Then, $U$ is in the subgroup $\varphi(SU(2))\subset SU(M)$ when $n\geq 2$ (or we have the trivial identification $\varphi(SU(2))= SU(2)$ when there is just one photon, $n=1$). Then,
$$
 \bra{\beta}iJ_z\ket{\beta}=\bra{\alpha}U^\dag iJ_z U\ket{\alpha}.
$$
Note that $U^\dag iJ_z U$ is the image of $iJ_z$ by the adjoint map ${\rm Ad}_{U^{\dag}}(iJ_z)$ (see Warner\cite{War83}, chapter 3). But since  $U$ belongs to the subgroup $\varphi(SU(2))$, the map ${\rm Ad}_{U^\dag}:\mathfrak{h}\to \mathfrak{h}$ is an automorphism.  Hence, because $\{J_x,J_y,J_z\}$ is a basis of $\mathfrak{h}$, there should exist three real numbers $a,b,c$ such that
$$
U^\dag iJ_z U=aiJ_x+biJ_y+ci J_z.
$$
Moreover, 
$$
\begin{aligned}
\bra{\alpha}aJ_x+bJ_y+c J_z\ket{\alpha}=&(n-2k)c,
\end{aligned}
$$
where we have used that $\bra{\alpha}J_x\ket{\alpha}=\frac{-i}{2}\bra{\alpha}[J_z,J_y]\ket{\alpha}=\frac{-i}{2}\bra{\alpha}\left(J_zJ_y-J_yJ_z\right)\ket{\alpha}=0$ because $\ket{\alpha}$ is an eigenstate of $J_z$, and likewise $\bra{\alpha}J_y\ket{\alpha}=0$.
Therefore,
$$
\bra{\beta} J_z \ket{\beta}=(n-2k)c
$$
and the theorem follows.
\end{proof}

The above theorem allow us to state the following rule of proportionality:
\begin{corollary}
 Let $\ket{\alpha_1}=\ket{n-k_1,k_1}$, $\ket{\alpha_2}=\ket{n-k_2,k_2}$, assume  that $k_2\neq n/2$, denote by $\ket{\beta_1}=\varphi(S)\ket{\alpha_1}$, $\ket{\beta_2}=\varphi(S)\ket{\alpha_2}$. Then for any $S\in SU(2)$,
 $$
 \bra{\beta_1}\left(\hat{n}_1-\hat{n}_2\right)\ket{\beta_1}=\frac{n-2k_1}{n-2k_2}\bra{\beta_2}\left(\hat{n}_1-\hat{n}_2\right)\ket{\beta_2}.
 $$
\end{corollary}
Observe that the condition $k_2\neq n/2$ is for free when $n$ is odd.  In the particular case when $n$ is even (by setting $k_1= n/2$ and $k_2\neq n/2$ in the above corollary), we can state the following equipartition rule for the $\ket{n/2,n/2}$ state:

\begin{corollary}\label{cor:3}
Let $k$ be a nonnegative integer, then for any $S\in SU(2)$, the state $\ket{k,k}$ evolves to $\ket{\beta}:=\varphi(S) \ket{k,k}$ in such a way that
 $$
 \bra{\beta}\hat{n}_1\ket{\beta}=\bra{\beta}\hat{n}_2\ket{\beta}=k.
 $$
\end{corollary}
This means that, if we evenly distribute $n$ photons into the two possible input ports, the expected photon number at each output is always $n/2$ for any input port. This result agrees with previous analyses with Jacobi polynomials\cite{CST89} and it is in line with our intuition that a two-port is basically some kind of beam splitter with different coupling ratios which just redistributes the inputs. For equal photon numbers, the terms that are added and subtracted from each input cancel on average. 

By the properties of the Lie algebra $\mathfrak{su}(m)$, Corollary \ref{cor:3} can be extended to an arbitrary number of ports:

\begin{theorem}
Let $k$ be a nonnegative integer, then for any $S\in SU(m)$, the state $\ket{k,k, \ldots , k}$ evolves to $\ket{\beta}:=\varphi(S) \ket{k,k, \ldots , k}$ in such a way that
 $$
 \bra{\beta}\hat{n}_1\ket{\beta}=\bra{\beta}\hat{n}_2\ket{\beta}=\cdots =\bra{\beta}\hat{n}_m\ket{\beta} =k.
 $$
\end{theorem}

We will only sketch the proof and leave the details for the reader. The procedure is similar to the case with two ports. There are $m^2-1$ generalized Gell-Mann matrices, grouped in three matrix families, which generate the $\mathfrak{su}(m)$ algebra for an arbitrary $m$ \cite{BK08}. The photon number difference operator between any two ports is $\hat{n}_j-\hat{n}_l=-i d\varphi(D_{jl})$ for a diagonal matrix filled with zeroes except for the elements $D_{jj}=i$ and $D_{ll}=-i$. Now, for an input state with $k$ photons in each input, 
\begin{equation}
-i \bra{k,\ldots,k}U^\dag \hat{n}_j-\hat{n}_l U \ket{k,\ldots,k} 
\end{equation}
can be written as
\begin{equation}
-i \bra{k,\ldots,k}  {\rm Ad}_{S^{\dag}}(D_{jl}) \ket{k,\ldots,k}.
\end{equation}

The adjoint map can be computed from the known commutation relations of $D_{jl}$ with the generalized Gell-Mann matrices to show the average photon number difference must be 0 for any pair $j$ and $l$. 

This result agrees well with our intuition that, if a linear optics multiport can be described a concatenation of two port beam splitters and phase shifters, which do not affect the photon number average, then we have a series of steps for which the mean photon number does not change. In all the two port splitters we just redistribute the photons and the combined effect will not modify the final average. 

\section{Example for two photons in two modes}
We can see a simple example of our result for a system with two photons in two modes ($n=m=2$). For two modes, we define a scattering matrix
\begin{equation}
S=\left( \begin{array}{cc}
S_{11} & S_{12}  \\
S_{21} & S_{22} \end{array} \right).
\end{equation}
and a Hamiltonian
\begin{equation}
H_S=\left( \begin{array}{cc}
H_{S11} & H_{S12}  \\
H_{S21} & H_{S22} \end{array} \right).
\end{equation}

If we label the available photon states as $\ket{1}=\ket{20}$, $\ket{2}=\ket{02}$ and $\ket{3}=\ket{11}$, using Equation ($\ref{HeisenbergU}$) we obtain the unitary evolution matrix 
\begin{equation}
\label{U22}
U=\left( \begin{array}{ccc}
S_{11}^2 & S_{12}^2 & \sqrt{2}S_{11}S_{12} \\
S_{21}^2 & S_{22}^2 & \sqrt{2}S_{21}S_{22} \\
\sqrt{2}S_{11}S_{21} & \sqrt{2}S_{12}S_{22} & S_{11}S_{22}+S_{12}S_{21} \end{array} \right).
\end{equation}

From Equation (\ref{HUpq}), we can give the Hamiltonian $H_U$ in terms of the elements of $H_S$ as
\begin{equation}
\label{HU22}	
H_U=\left( \begin{array}{ccc}
2H_{S11} & 0 & \sqrt{2}H_{S12} \\
0 & 2H_{S22} & \sqrt{2}H_{S21} \\
\sqrt{2}H_{S21} & \sqrt{2}H_{S12} & H_{S11}+H_{S22}\end{array} \right).
\end{equation}

We can check the results for the simple example of the evolution of two photons inside a balanced beam splitter. The scattering matrix, up to a global phase, is
\begin{equation}
S=\frac{1}{\sqrt{2}}\left( \begin{array}{cc}
1 & 1 \\
1 & -1 \end{array} \right).
\end{equation}
We can find the corresponding Hamiltonian
\begin{equation}
H_S=\left( \begin{array}{cc}
0.46008 & -1.11072  \\
-1.11072 & 2.68152 \end{array} \right)
\end{equation}
either from the results in\cite{LN04} or by computing $H_S=-i \ln (S)$.

The Hamiltonian $H_U$, substituting in Equation (\ref{HU22}), is
\begin{equation}
H_U=\left( \begin{array}{ccc}
0.92016  & 0 & -1.57080 \\
0 & 5.36304 & -1.57080 \\
-1.57080 & -1.57080  & 3.14160 \end{array} \right)
\end{equation}
and we can check the result is correct by computing $e^{iH_U}$ and seeing it is, indeed, the unitary matrix
\begin{equation}
U=\left( \begin{array}{ccc}
\frac{1}{2} & \frac{1}{2} &\frac{1}{\sqrt{2}}\\
\frac{1}{2} & \frac{1}{2} &-\frac{1}{\sqrt{2}} \\
\frac{1}{\sqrt{2}} & -\frac{1}{\sqrt{2}} & 0 \end{array} \right)
\end{equation}
we expected from Equation (\ref{U22}). 

Observe that the expected values of $\hat{n}_1$ and $\hat{n}_2$ for the state $U\ket{1,1}=\frac{1}{\sqrt{2}}\ket{2 0}-\frac{1}{\sqrt{2}}\ket{0 2}$ are $1$, in agreement with our equipartition rule for the $\ket{1,1}$ state. 
Moreover, the expectation value of $\hat{n}_1-\hat{n}_2$ for the states $U\ket{2,0}$ and $U\ket{0,2}$ is $0$ in agreement, as well, with our proportionality rule ($0=\frac{-2}{2}0$).

\section{Discussion}
We have presented an alternative derivation of the formula for the effective Hamiltonian determining the evolution of $n$ photons through an $m$-mode linear optics multiport that preserves the photon number based on the differential form of the unitary evolution. This description has a reduced number of degrees of freedom, which makes studying the photon evolution easier. We have $m^2$ real parameters from $H_S$ instead of the $M^2$ parameters of a general $M\times M$ Hermitian matrix. The $H_U$ matrix has multiple null entries and, as it is Hermitian, we only need to compute explicitly the upper or lower triangular matrix plus the diagonal. Finding the unitary evolution matrix $U$ still presents some computational challenges. In particular, computing the matrix exponential can be a bottleneck.

Apart from the computational implications, expressing the evolution as an algebra homomorphism can be useful to study linear optical networks. Some quantum optics problems might be easier to tackle with a description in the Lie algebra using the wealth of results from group theory, as shown from the proportionality rules we have described for linear devices with two input ports.

The presented result also gives a natural way to study optical systems that combine linear and nonlinear optical parts. Systems which include squeezing or parametric processes are usually described in terms of their Hamiltonians, which can be readily combined with the Hamiltonian of the linear part. In that regard, our analysis can also be extended to general linear optics networks where the number of photons is not conserved like in parametric amplifiers\cite{Leo10}. These systems still admit a linear description with the Lie group of quasi-unitary matrices and its corresponding associated algebra\cite{LN04}. 

\begin{acknowledgements}
This work has been funded by: Spanish Ministerio de Econom\'ia y Competitividad, Project TEC2015-69665-R, MINECO/FEDER, UE and Junta de Castilla y Le\'on VA089U16 (J.C. Garcia-Escartin); DGI-MINECO grant (FEDER) MTM2017-84851-C2-2-P and Universitat Jaume I, grant P1-B2016-07 (V. Gimeno); Ministerio de Econom\'ia y Competitividad (MINECO), grant MTM2015-65764-C3-2-P, and Universitat Jaume I, grant P1-1B2015-02 (J.J. Moyano-Fern\'andez).
\end{acknowledgements}
\bigskip
\noindent
\newcommand{\noopsort}[1]{} \newcommand{\printfirst}[2]{#1}
  \newcommand{\singleletter}[1]{#1} \newcommand{\switchargs}[2]{#2#1}

\end{document}